\newcommand{\bbZ}{\mathbb{Z}}
\newcommand{\bbN}{\mathbb{N}}
\newtheorem{thm}{Theorem}[section]
\newtheorem{remark}{Remark}
\begin{document}
%
\title{Batch Arrival Multiserver Queue with Setup Time}

\author{Tuan Phung-Duc \\
Department of Mathematical and Computing Sciences\\
 Tokyo Institute of Technology \\
Ookayama, Meguro-ku, Tokyo 152-8552, Japan\\
Email: tuan@is.titech.ac.jp}


%


\maketitle

\begin{abstract}
Queues with setup time are extensively studied because they have application in performance evaluation of power-saving data centers. 
In a data center, there are a huge number of servers which consume a large amount of energy. In the current technology, an idle server still consumes about 60\% of its peak processing a job. Thus, the only way to save energy is to turn off servers which are not processing a job. However, when there are some waiting jobs, we have to turn on the OFF servers. A server needs some setup time to be active during which it consumes energy but cannot process a job. Therefore, there exists a trade-off between power consumption and delay performance. Gandhi et al. \cite{Gandhi10a,Gandhi10} analyze this trade-off using an M/M/$c$ queue with staggered setup (one server in setup at a time). In this paper, using an alternative approach, we obtain generating functions for the joint stationary distribution of the number of active servers and that of jobs in the system for a more general model with batch arrivals and state-dependent setup time. We further obtain moments for the queue size.  Numerical results reveal that keeping the same traffic intensity, the mean power consumption decreases with the mean batch size for the case of fixed batch size. One of the main theoretical contribution is a new conditional decomposition formula showing that the number of waiting customers under the condition that all servers are busy can be decomposed to the sum of two independent random variables where the first is the same quantity in the corresponding model without setup time while the second is the number of waiting customers before an arbitrary customer. 
\end{abstract}


%

\section{Introduction}
Cloud computing is a new paradigm where companies make money by providing computing service through the Internet. In cloud computing, users buy software and hardware resources from a provider and access to these resources through the Internet so they do not have to install and maintain by themselves. The core part of cloud computing is a data center where there are a huge number of servers. The key issue for the management of data centers is to minimize the power consumption while keeping an acceptable service level for customers~\cite{Barroso07,Chen05,Greenberg08,Mazzucco12,Meisner09,phungduc14,Schwartz12}. It is reported that under the current technology an idle server still consumes about 60\% of its peak processing jobs~\cite{Barroso07}. Thus, the only way to save power is to turn off idle servers. However, if the workload increases, OFF servers should be turned on to serve waiting customers. Servers need some setup time during which they consume energy but cannot process jobs. Therefore, customers have to wait a longer time in comparison with the case where the servers are always ON.

Although queues with setup time have been extensively investigated in the literature, most of papers deal with single server models~\cite{Takagi90,Bischof01,Choudhury98,Choudhury00} where the service time follows a general distribution. Artalejo et al.~\cite{Artalejo05} present a throughout analysis for multiserver queues with setup time where the authors consider the case in which at most one server can be in setup mode at a time. This policy is referred to as staggered setup in~\cite{Gandhi10}. It is pointed out in~\cite{Artalejo05} that the model belongs to a QBD class for which the rate matrix is explicitly obtainable. By solving difference equations, Artalejo et al.~\cite{Artalejo05} derive an analytical solution where the stationary distribution is recursively obtained without any approximation. Recently,  motivated by applications in data centers, multiserver queues with setup time have been extensively investigated in the literature. In particular, Gandhi et al.~\cite{Gandhi10,Gandhi10b,gandhi11,Gandhi11b,Gandhi13} analyze multiserver queues with setup time. They consider the M/M/$c$ system with staggered setup and derive some closed form approximations for the ON-OFF policy where the number of servers in the setup mode at a time is not limited. Gandhi et al.~\cite{gandhi11}  extend their analysis to the case where a free server waits for a while before shutdown. As a related model, Tian et al.~\cite{Tian99} consider M/M/$c$ model with vacation where after a service an idle server leaves for an exponentially distributed vacation. 

In all the work on multiserver queue mentioned above, customers (jobs) are assumed to arrive individually according to a Poisson process. However, in cloud computing a big task might be divided into multiple subtasks to process in parallel~\cite{Dean08}. 
This motivates us to consider a multiserver queueing system with state-dependent setup time under batch arrival settings. 
In this paper, using a generating function approach, we derive a clear solution for all the partial generating functions for the joint stationary distribution of the number of active servers and that of customers in the system. The generating functions are obtained using recursive formulae. Special cases of our model conform to the models in~\cite{Artalejo05,Gandhi10,Tian99,phungduc14b}. Furthermore, we derive a recursion which allows to calculate all the moments of the queue length. Numerical results are presented to show the affect of batch arrivals on the performance of the system. Furthermore, we present a method to derive the waiting time distribution. We also present some variants which can be analyzed by adapting the methodologies presented in this paper. One of the most important theoretical contribution is that we prove a conditional decomposition property showing that the queue length under the conditional that all the servers are busy can be decomposed into the sum of two independent random variables with clear physical meaning. 

The rest of our paper is organized as follows. First we present the model in Section~\ref{model:sec}. Section~\ref{analysis:sec} is devoted to the detailed analysis where we derive the partial generating functions and the joint stationary distribution. Section~\ref{waiting_time_distribution:sec} briefly presents the method to compute the waiting time distribution while Section~\ref{variant models} shows some variant models that could be analyzed by the methodology of this paper. In Section~\ref{decomposition}, we discuss the conditional decomposition property for the queue length. Section~\ref{waiting_time_distribution:sec} briefly shows the derivation of the waiting time distribution while in Section~\ref{variant models}, some variants are demonstrated. In Section~\ref{numerical:sec}, we provide extensive numerical results to show the performance of the system. Finally, concluding remarks are presented in Section~\ref{concluding_remark:sec}.

\section{Model}\label{model:sec}
We consider M/M/$c$ queueing systems with staggered setup. Customers arrive at the system in batch according to a Poisson process with rate $\lambda$. We assume that the batch size distribution is $\beta_i$ ($i \in \bbN = \{1,2,\dots \}$) and its generating function is given by $\beta(z)$. In this system, an idle server is turned off immediately. If there are some waiting customers, OFF servers are turned on. Furthermore, a server needs some setup time to be active so as to serve a waiting customer. We assume that the setup time of the OFF servers follows the exponential distribution with mean $1/\alpha_i$ provided that there are $i$ active servers. If a server finishes a job, this server picks a waiting customer if any. If there is not a waiting customer, the server in setup process and idle ones are turned off immediately. It should be noted that in this model a server is in either BUSY or OFF or SETUP. We assume that every customer that enters the system receives service and departs. This means that there is no abandonment. We assume that the service time of jobs follows an exponential distribution with mean $1/\mu$. 

%
%
\begin{figure}
\begin{center}
%
\includegraphics[scale=0.25]{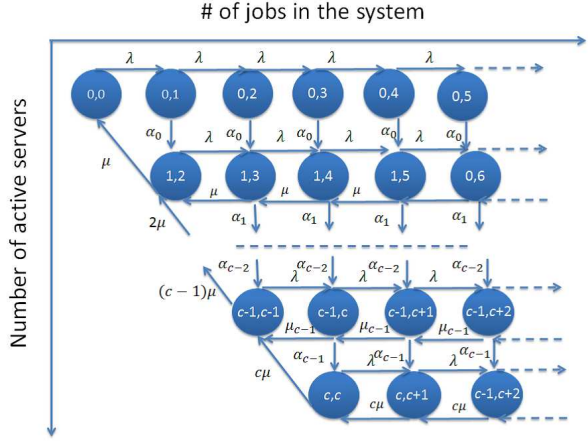}
\end{center}
\caption{State transition diagram ($\beta(z) = z$).}
\label{m:fig}
%
\end{figure}

\section{Analysis of the model}\label{analysis:sec}
\subsection{Generating functions}\label{generating_function:sec}
We present Rouche's theorem which will be repeatedly used in this section. 
\begin{thm}[Rouche's Theorem (see e.g.~\cite{Adan06})]\label{rouche:thm}
Let $D$ denote a bounded region which has a simple closed contour $C$. $f(z)$ and $g(z)$ are two analytic functions on $C$ and $D$. Assume that $|f(z)| < |g(z)|$  on $C$. Then $f(z)$ has in $D$ the same number of zeros as $g(z)$ where all zeros are counted as their multiplicity.
\end{thm}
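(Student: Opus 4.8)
The plan is to deduce the result from the argument principle together with a short winding‑number (equivalently, homotopy) argument. I read the statement in the form in which it is actually used here, namely: if $|f(z)|<|g(z)|$ on $C$, then $g(z)$ and $g(z)+f(z)$ have the same number of zeros in $D$, counted with multiplicity. First I would recall the argument principle: if $h$ is analytic on $\overline{D}=D\cup C$ and $h$ has no zero on $C$, then the number $N(h)$ of zeros of $h$ in $D$, counted with multiplicity, equals
\[
N(h)\;=\;\frac{1}{2\pi i}\oint_C\frac{h'(z)}{h(z)}\,dz\;=\;\frac{1}{2\pi}\,\Delta_C\arg h(z),
\]
the winding number of the closed curve $h(C)$ about the origin. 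This is the standard consequence of Cauchy's residue theorem applied to $h'/h$, whose singularities in $D$ are poles at the zeros of $h$ with residue equal to the order of the zero; I would take it as known. I will apply it to $h=g$ and to $h=g+f$.

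The hypothesis $|f(z)|<|g(z)|$ on $C$ gets used twice. First, it forces $g(z)\neq 0$ and $g(z)+f(z)\neq 0$ for every $z\in C$, since $|g(z)+f(z)|\ge|g(z)|-|f(z)|>0$; hence $N(g)$ and $N(g+f)$ are both well defined. Second, for $z\in C$ I can factor
\[
g(z)+f(z)\;=\;g(z)\Bigl(1+\frac{f(z)}{g(z)}\Bigr),
\]
with $\bigl|f(z)/g(z)\bigr|<1$, so the curve $z\mapsto 1+f(z)/g(z)$ ($z\in C$) lies in the open unit disk centred at $1$, hence in the open right half‑plane. A closed curve contained in a half‑plane not containing the origin has winding number $0$ about the origin, so $\Delta_C\arg\bigl(1+f/g\bigr)=0$.

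Combining these, $\Delta_C\arg(g+f)=\Delta_C\arg g+\Delta_C\arg\bigl(1+f/g\bigr)=\Delta_C\arg g$; dividing by $2\pi$ and invoking the argument principle gives $N(g+f)=N(g)$, which is the claim, and multiplicities are included automatically because the contour integral counts them. An alternative route, which I would also mention, replaces the factoring step by the homotopy $h_t(z)=g(z)+tf(z)$, $t\in[0,1]$: the same inequality shows $|h_t(z)|\ge|g(z)|-|f(z)|>0$ on $C$ for all $t$, the map $t\mapsto N(h_t)=\frac{1}{2\pi i}\oint_C h_t'(z)/h_t(z)\,dz$ is continuous since the integrand is jointly continuous on $[0,1]\times C$ (the denominator being bounded away from $0$), and a continuous integer‑valued function on $[0,1]$ is constant, whence $N(g)=N(h_0)=N(h_1)=N(g+f)$.

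Since this is a classical theorem, there is no genuine obstacle; the only points needing attention are the hypotheses that make the argument principle applicable — $C$ a simple closed contour bounding $D$, $f$ and $g$ analytic on $\overline{D}$, and above all the non‑vanishing of $g$ (hence of $g+f$) on $C$, which is exactly what the strict inequality $|f|<|g|$ on $C$ supplies — together with the bookkeeping of zeros with multiplicity, which the contour‑integral formulation handles for free.
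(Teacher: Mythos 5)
Your proof is correct, and it is the classical argument--principle proof of Rouch\'e's theorem; note, however, that the paper itself offers no proof at all --- Theorem~\ref{rouche:thm} is simply quoted from the literature (Adan et al.) and then applied, so there is nothing in the text to compare your argument against step by step. Two remarks are worth making. First, you were right to re-read the statement: as literally written (``$f$ has in $D$ the same number of zeros as $g$'' under $|f|<|g|$ on $C$) it is false (take $f\equiv 1/2$, $g(z)=z$ on the unit disk), and the form actually invoked later in Section~\ref{generating_function:sec} is the standard one, that $g$ and $g\pm f$ (there, $f-g$ and $g$) have the same number of zeros; your proof establishes exactly that, and replacing $f$ by $-f$ covers the sign used in the paper. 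Second, both of your routes are sound: the non-vanishing of $g$ and $g+f$ on $C$ follows from the strict inequality, the factorization $g+f=g\,(1+f/g)$ places the curve $1+f/g$ in the right half-plane so its winding number about the origin is zero, and the homotopy variant $h_t=g+tf$ with the integer-valued continuous zero count is an equally standard alternative. So the proposal is complete and supplies a proof where the paper merely cites one; the only caution is the mismatch between the theorem as stated and the theorem as used, which your reformulation silently (and correctly) repairs.
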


Let $C(t)$ and $N(t)$ denote the number of busy servers and the number of jobs in the system at time $t$, respectively. Under the assumptions made in Section~\ref{model:sec}, it is easy to see that $\{X(t) = (C(t), N(t)); t \geq 0\}$ forms a Markov chain in the state space 
\[
	\mathcal{S} = \{ (i,j); j \in \bbZ_+, i = 0,1,\dots,\min(c,j) \},
\]
where $\bbZ_+ = \{ 0,1,\dots\}$. See Figure~\ref{m:fig} for the transitions among states for the case of single arrival, i.e., $\beta(z) = z$.

In this paper, we assume that $\rho = \lambda \beta^\prime (1) /(c\mu) < 1$ which is the necessary and sufficient condition for the stability of the Markov chain. 
In what follows, we assume that the Markov chain is ergodic. Under this ergodic condition, let 
\[
\pi_{i,j} = \lim_{t \to \infty} \mathbb{P}(N(t)=i, C(t) = j), \qquad (i,j) \in \mathcal{S},
\]
denote the stationary probability of state $(i,j)$.

The balance equations for states $(0,j)$ ($j \in \bbN$) read as follows. 
\begin{align*}
\lambda \sum_{i=1}^{j} \beta_i \pi_{0,j-i} & = (\lambda + \alpha_0) \pi_{0,j}, \qquad  j \in \bbN.
\end{align*}
Let $\Pi_0 (z) = \sum_{j=0}^\infty \pi_{0,j} z^j$. Multiplying the above equation by $z^j$ and adding over $j \in \bbN$ yields,
\[
\lambda \beta(z) \Pi_0(z) = (\lambda + \alpha_0) (\Pi_0 (z) - \pi_{0,0}),
\]
or equivalently 
\begin{equation}
\label{Pi0z:eq}
	\Pi_0 (z) = \frac{(\lambda + \alpha_0) \pi_{0,0} }{ \lambda + \alpha_0 - \lambda \beta(z)}.
\end{equation}

The balance equation for state $(0,0)$ is given by 
\[
	\lambda \pi_{0,0} = \mu \pi_{1,1}.
\]

This equation is also derived from the balance between flows in and out the group of states $\{ (0,j); j \in \bbZ_+ \}$. 
Indeed, we have 
\[
	\alpha_0 (\Pi_0(1)- \pi_{0,0}) = \mu \pi_{1,1},
\]
leading to 
\[
	\pi_{1,1} = \frac{ \alpha_0 (\Pi_0(1)- \pi_{0,0})  }{\mu} = \frac{\lambda}{\mu} \pi_{0,0}. 
\]

Now, we shift to the case where there is one active server, i.e., $i = 1$. We have 
\begin{equation}
\label{pi_{1,1}:eq}
(\lambda + \mu) \pi_{1,1}  = \alpha_0 \pi_{0,1} + \mu \pi_{1,2} + 2 \mu \pi_{2,2}, \quad j = 1, 
\end{equation}
\begin{equation}
(\lambda + \mu + \alpha_1) \pi_{1,j}  = \lambda \sum_{i=1}^{j-1} \beta_i \pi_{1,j-i} + \alpha_0 \pi_{0,j} + \mu \pi_{1,j+1}, \quad j \geq 2. \label{pi_{1,j}:eq}
\end{equation}
We define the generating for the states with $i=1$ as follows. 
\[
	\Pi_1(z) = \sum_{j=0}^\infty \pi_{1,j+1} z^j.	
\]
$\Pi_1(z)$ represents the generating function of the number of waiting customers while there is one active server.

Multiplying (\ref{pi_{1,1}:eq}) by $z^0$ and (\ref{pi_{1,j}:eq}) by $z^{j-1}$ and taking the sum over $j \in \bbN$ yields, 

\begin{eqnarray}
\label{Pi1(z):eq}
(\lambda + \mu + \alpha_1) \Pi_1(z) - \alpha_1 \pi_{1,1} = \lambda \beta(z) \Pi_1(z)  + \frac{\alpha_0}{z} (\Pi_0(z) - \pi_{0,0}) + \frac{\mu}{z} (\Pi_1 (z) - \pi_{1,1} ) + 2\mu \pi_{2,2}. 
\end{eqnarray}
Arranging (\ref{Pi1(z):eq}) we obtain 
\begin{equation}\label{functional_eq_Pi1z:eq}
	f_1(z) \Pi_1(z) = \alpha \Pi_0 (z) + \alpha_1 z \pi_{1,1} - \alpha_0 \pi_{0,0} - \mu \pi_{1,1} + 2 \mu z \pi_{2,2},
\end{equation}
where $f_1 ( z) = (\lambda +  \mu + \alpha_1) z - \lambda z \beta (z) - \mu$. Because $f_1(0) = -\mu < 0$ and $f_1(1) = \alpha_1 > 0$, $0 < \exists z_1 < 1$ such that $f_1(z_1)=0$. Furthermore, Rouche's theorem 
(Theorem~\ref{rouche:thm}) shows that $z_1$ is the unique root in the unit circle. Indeed, letting $g(z) = (\lambda +  \mu + \alpha_1) z $ and $f(z) = \lambda z \beta(z) + \mu$, $C = \{ z \in \mathbb{C} \ | \ |z| =1 \
$ and $D = \{  z \in \mathbb{C} \ | \ |z| <1 \}$, we see that 
\[ 
|f(z)| \leq \lambda |z| | \beta(z) | + \mu \leq \lambda + \mu < \lambda + \mu + \alpha_1 = |g(z)|, \qquad z \in C.
\]
Thus, applying Rouche's theorem, we have that $f(z)-g(z)$ and $g(z)$ have the same number of zeros.

Since $\Pi_1(z)$ converges in $|z| \leq 1$, letting $z = z_1$ yields,
\begin{equation}
\pi_{2,2} = \frac{ (\mu - \alpha z_1) \pi_{1,1} + \alpha (\pi_{0,0} - \Pi_0 (z_1))  }{2 \mu z_1}. 
\end{equation}
It should be noted that for the case $\beta(z) = z$, i.e., single arrival, we have

\[
	z_1 = \frac{\lambda + \mu + \alpha_1   - \sqrt{(\lambda + \mu + \alpha_1)^2 - 4 \lambda \mu}   }{2 \lambda}.
\]
\begin{remark}
At this point, we have expressed $\Pi_1(z)$ and $\pi_{2,2}$ in terms of $\pi_{0,0}$.
\end{remark}

Furthermore, letting $f_1(z) = (z-z_1) g_1 (z)$, we have that $g_1 (z)$ is an analytic function on the unit circle $|z| < 1$. Substituting this into (\ref{functional_eq_Pi1z:eq}) and arranging the result, we obtain 
\[
	\Pi_1 (z) = \frac{\mu \pi_{1,1} + \alpha_0 \pi_{0,0} + \alpha_1 \widehat{\pi}_0 (z) }{z_1 g_1 (z)}.
\]
where 
\[
	\widehat{\pi}_0 (z) = \frac{\Pi_0 (z) - \Pi_0 (z_1)}{z-z_1}.
\]

Next, we shift to the case where there are $i$ ($2 \leq i \leq c-1$) active servers. 
The balance equations read as follows. 
\begin{equation}
\label{pi_{i,i}:eq}
 (\lambda + i\mu ) \pi_{i,i}  = \alpha_{i-1} \pi_{i-1,i} + i \mu \pi_{i,i+1} + (i+1) \mu \pi_{i+1,i+1},
\end{equation}
\begin{equation}
\label{pi_{i,j}:eq}
 (\lambda + i\mu + \alpha_i) \pi_{i,j}  = \lambda \sum_{k=1}^{j-i} \beta_k \pi_{i,j-k} + i\mu \pi_{i,j+1} + \alpha_{i-1} \pi_{i-1,j},
\end{equation}
for  $j \geq i+1$.
We define the partial generating function for the case of having $i$ active servers as follows. 
\[
	\Pi_i (z) = \sum_{j=i}^\infty \pi_{i,j} z^{j-i}, \qquad i = 2,3,\dots,c-1. 
\]
Multiplying (\ref{pi_{i,i}:eq}) by $z^{0}$ and (\ref{pi_{i,j}:eq}) by $z^{j-i}$ and adding over $j = i,i+1,\dots$, we obtain 

\begin{eqnarray*}
 \lefteqn{ (\lambda + i \mu + \alpha_i) \Pi_i (z) - \alpha_i \pi_{i,i}  =   \lambda  \beta(z) \Pi_i(z) + \frac{i\mu}{z} (\Pi_i (z) - \pi_{i,i} )} \\ 
                                                                         &    &  \mbox{} + \frac{\alpha_{i-1}}{z} (\Pi_{i-1} (z) - \pi_{i-1,i-1}) + (i+1) \mu \pi_{i+1,i+1}, \qquad
\end{eqnarray*}
or equivalently 
\begin{eqnarray}
\label{Pi_i (z):eq}
f_i(z) \Pi_i (z) -\alpha_i z \pi_{i,i}  = (i+1) \mu z \pi_{i+1,i+1} - i\mu \pi_{i,i}  + \alpha_{i-1} (\Pi_{i-1} (z) - \pi_{i-1,i-1}),  \nonumber \\
\end{eqnarray}
where $f_i (z) = (\lambda + i\mu + \alpha_i) z - \lambda z \beta (z) - i\mu $. Since $f_i (0) = -i \mu < 0$ and $f_i (1) = \alpha_i > 0$, $0 < \exists z_i < 1$ such that $f_i (z_i) = 0$. Rouche's theorem also shows that $z_i$ is the unique root inside the unit circle. For the case of single arrival, i.e., $\beta(z) = z$, we have
\[
	z_i = \frac{ \lambda + i \mu + \alpha_i  - \sqrt{ (\lambda + i \mu + \alpha_i)^2 - 4 i \lambda \mu }    }{2 \lambda}.
\]
Putting $z= z_i$ into (\ref{Pi_i (z):eq}), we obtain 
\begin{eqnarray}
\label{pi_i+1,i+1}
 \pi_{i+1,i+1} & = & \frac{ (i\mu - \alpha_i z_i) \pi_{i,i} + \alpha_{i-1} (\pi_{i-1,i-1} - \Pi_{i-1} (z_i) )         }{ (i+1) \mu z_i}, \nonumber \\
                 &    &  i = 1,2,\dots,c-1. 
\end{eqnarray}
\begin{remark}
At this point, we have expressed the generating functions $\Pi_i (z)$ ($i=0,1,\dots,c-1$) and boundary probabilities $\pi_{i,i}$ ($i=0,1,\dots,c$) in terms of 
$\pi_{0,0}$.
\end{remark}

Similar to the case $i=1$, we also have 
\[
	\Pi_i (z) = \frac{\mu \pi_{i,i} + \alpha_0 \pi_{i-1,i-1} + \alpha_i \widehat{\pi}_{i-1} (z) }{z_i g_i (z)}.
\]
where 
\[
	\widehat{\pi}_{i-1} (z) = \frac{\Pi_{i-1} (z) - \Pi_{i-1} (z_i)}{z-z_i}, \qquad g_i(z) = \frac{f_i(z)}{z-z_i}.
\]

Finally, we consider the case $i=c$, i.e., all servers are active. Balance equations are given as follows. 
\begin{align}
\label{pi_{c,c}:eq}
(\lambda + c\mu) \pi_{c,c} & = \alpha_{c-1} \pi_{c-1,c} + c\mu \pi_{c,c+1}, \\
\label{pi_{c,j}:eq}
(\lambda + c\mu ) \pi_{c,j} & = \alpha_{c-1} \pi_{c-1,j} + \lambda \sum_{i=1}^{j-c} \beta_i \pi_{c,j-i} + c\mu \pi_{c,j+1},  \qquad j \geq c+1. 
\end{align}
We define the generating function for the case $i=c$ as follows. 
\[
	\Pi_c (z) = \sum_{j=c}^\infty \pi_{c,j} z^{j-c}.
\]
Multiplying (\ref{pi_{c,c}:eq}) by $z^0$ and (\ref{pi_{c,j}:eq}) by $z^{j-c}$ and summing over $j \geq c$, we obtain 
\begin{eqnarray*}
(\lambda + c\mu ) \Pi_c(z) = \frac{\alpha_{c-1}}{z}   ( \Pi_{c-1}(z) - \pi_{c-1,c-1} )   +  \frac{c \mu}{z} (\Pi_c(z) - \pi_{c,c}) + \lambda \beta(z) \Pi_c(z),
\end{eqnarray*}
or equivalently 
\begin{align*}
 \Pi_c(z) & = \frac{ \alpha_{c-1} ( \Pi_{c-1} (z) - \pi_{c-1,c-1})    -c\mu \pi_{c,c}     }{ f_c (z) }, \\
            & = \frac{ \alpha_{c-1} ( \Pi_{c-1} (z) - \Pi_{c-1} (1)) }{ f_c (z) }
\end{align*}
where $ f_c (z) = (\lambda + c\mu) z - \lambda z \beta(z) - c\mu $ and the second equality is due to the balance between the flows in and out the group of states 
$\{ (c,j); j =c,c+1,\dots \}$. Thus, applying L'Hopital's rule and arranging the results yields
\begin{equation}\label{Pi_c(1):eq}
	\Pi_c(1) = \frac{ \alpha_{c-1} \Pi_{c-1}^\prime (1)   }{c\mu - \lambda \beta^\prime (1)} .
\end{equation}

\begin{remark}
It should be noted that we have expressed $\Pi_i (z)$ ($i=0,1,\dots,c$) in terms of $\pi_{0,0}$, which is uniquely determined by the following normalization condition:
\begin{equation}\label{normalization:cond}
	\sum_{i=0}^c \Pi_i (1) = 1.
\end{equation}
According to (\ref{Pi_c(1):eq}), in order to calculate $\Pi_c (1)$, we need $\Pi_{c-1}^\prime (1)$ which is recursively obtained by Theorem~\ref{factorial:0c-1:thm}.
\end{remark}

\begin{remark}
Once $\pi_{i,i}$ ($i = 0,1,\dots,c$) is determined, we can calculate all the steady state probabilities $\pi_{i,j}$ by a recursive manner via the balance equations. In particular, 
the calculation order is $\{ \pi_{0,j}; j \geq 0 \} \rightarrow \{ \pi_{1,j}; j \geq 1 \} \rightarrow \dots \rightarrow \{\pi_{c,j}; j \geq c\}$.
\end{remark}

In Section~\ref{factorial_moment:sec}, we show some simple recursive formulae for the partial factorial moments.
%
%
\subsection{Factorial moments}\label{factorial_moment:sec}
In this section, we derive simple recursive formulae for factorial moments. Because the generating function $\Pi_0 (z)$ is given in a simple form, its derivatives 
at $z=1$ are also explicitly obtained in a simple form. 

\begin{thm}\label{factorial:0c-1:thm}
The first partial moments of the queue length is recursively calculated as follows.
\begin{eqnarray}\label{first:moment}
\Pi_i^\prime (1) & = & \frac{\alpha_{i-1}}{\alpha_i} \Pi_{i-1}^\prime (1) + \frac{\lambda \beta^\prime (1) - \alpha_i - i\mu}{\alpha_i} \Pi_i(1)  + \frac{ (i+1) \mu \pi_{i+1,i+1} + \alpha_i \pi_{i,i}    }{\alpha_i}, \\
                     &&   \quad i = 1,2,\dots,c-1. \nonumber
\end{eqnarray}
where $\Pi_0^\prime(1) = \pi_{0,0} \lambda \beta^\prime (1)  (\lambda + \alpha_0)/\alpha_0^2$. 
Furthermore, the $n$-th ($n \geq 2$) partial factorial moment is given by  
\begin{eqnarray}\label{nth_moments:eq}
\Pi_i^{(n)} (1) &= & \frac{\alpha_{i-1}}{\alpha_i}   \Pi_{i-1}^{(n)} (1)  +  \frac{n(\lambda \beta^\prime (1) - i\mu - \alpha_i) \Pi_{i}^{(n-1)} (1) }{\alpha_i}  \nonumber \\
                  && \mbox{} + \frac{ \sum_{k=2}^n {}_n C_k \left(\lambda \beta^{(k)} (1) + k \lambda \beta^{(k-1)}(1) \right) \Pi_i^{(n-k)}   }{\alpha_i}, \qquad  i = 1,2,\dots,c-1, \nonumber
\end{eqnarray}
where $\Pi_0^{(n)} (1) =  n! \pi_{0,0} (\lambda \beta^\prime(1))^n (\lambda + \alpha_0)/ \alpha_0^{n+1}$.
\end{thm}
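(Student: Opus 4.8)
The plan is to obtain every assertion by differentiating, at $z=1$, the functional equations already derived in Section~\ref{generating_function:sec}. Two ingredients carry the argument. First, $\Pi_0$ is known in closed form by~(\ref{Pi0z:eq}). Second, for each $i=1,\dots,c-1$ the identities~(\ref{functional_eq_Pi1z:eq}) (the case $i=1$) and~(\ref{Pi_i (z):eq}) (the case $2\le i\le c-1$) can be written in the single form
\begin{equation*}
f_i(z)\,\Pi_i(z) = \alpha_{i-1}\Pi_{i-1}(z) + \alpha_i z\,\pi_{i,i} + (i+1)\mu z\,\pi_{i+1,i+1} - i\mu\,\pi_{i,i} - \alpha_{i-1}\pi_{i-1,i-1},\tag{$\star$}
\end{equation*}
with $f_i(z)=(\lambda+i\mu+\alpha_i)z-\lambda z\beta(z)-i\mu$; for $i=1$ this reduces to~(\ref{functional_eq_Pi1z:eq}), with $\alpha_0$ in the role of $\alpha_{i-1}$ and $\pi_{0,0}$ in the role of $\pi_{i-1,i-1}$. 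Before differentiating I would record that, under $\rho<1$ and the standing assumption that the batch-size distribution has finite factorial moments up to the order in question, every $\Pi_i$ extends analytically beyond $|z|=1$, so that all the derivatives $\Pi_i^{(n)}(1)$ used below exist and differentiation followed by evaluation at $z=1$ is legitimate; this is immediate from the quotient representations $\Pi_i(z)=(\,\cdots\,)/(z_i g_i(z))$ obtained in Section~\ref{generating_function:sec}, whose denominators have no zero on the closed unit disk, and directly from~(\ref{Pi0z:eq}) for $\Pi_0$.

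For the base case I would differentiate~(\ref{Pi0z:eq}) written as $\Pi_0(z)\bigl(\alpha_0+\lambda-\lambda\beta(z)\bigr)=(\lambda+\alpha_0)\pi_{0,0}$. Applying the Leibniz rule of order $n$ and using $\beta(1)=1$, $\Pi_0(1)=(\lambda+\alpha_0)\pi_{0,0}/\alpha_0$ (from~(\ref{Pi0z:eq}) at $z=1$), and the fact that the $j$-th derivative of $\alpha_0+\lambda-\lambda\beta(z)$ at $z=1$ equals $-\lambda\beta^{(j)}(1)$ for $j\ge1$, I would isolate the top-order term and solve for $\Pi_0^{(n)}(1)$. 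For $n=1$ this gives exactly $\Pi_0'(1)=\pi_{0,0}\lambda\beta'(1)(\lambda+\alpha_0)/\alpha_0^2$, and iterating the resulting recursion (which collapses to $\Pi_0^{(n)}(1)=(n\lambda\beta'(1)/\alpha_0)\,\Pi_0^{(n-1)}(1)$ in the fixed-batch normalisation used in the statement) yields the displayed closed form for $\Pi_0^{(n)}(1)$.

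For $i=1,\dots,c-1$ the two recursions follow from differentiating~$(\star)$. First I would compute the derivatives of $f_i$ at $z=1$: from $\beta(1)=1$ one gets $f_i(1)=\alpha_i$ and $f_i'(1)=i\mu+\alpha_i-\lambda\beta'(1)$, while for $k\ge2$ the affine part of $f_i$ contributes nothing and the Leibniz rule applied to $z\beta(z)$ gives $f_i^{(k)}(1)=-\lambda\bigl(\beta^{(k)}(1)+k\beta^{(k-1)}(1)\bigr)$. Next I would apply the Leibniz rule of order $n$ to both sides of~$(\star)$ and set $z=1$. On the right, $\alpha_i z\pi_{i,i}$ and $(i+1)\mu z\pi_{i+1,i+1}$ are affine in $z$: for $n\ge2$ all their derivatives of order $\ge2$ vanish and the constant terms disappear, leaving only $\alpha_{i-1}\Pi_{i-1}^{(n)}(1)$; for $n=1$ they leave in addition $\alpha_i\pi_{i,i}+(i+1)\mu\pi_{i+1,i+1}$. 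On the left, splitting off the $k=0$ term of $\sum_k\binom{n}{k}f_i^{(k)}(1)\Pi_i^{(n-k)}(1)$ gives $f_i(1)\Pi_i^{(n)}(1)=\alpha_i\Pi_i^{(n)}(1)$; dividing by $\alpha_i>0$ and substituting $-f_i'(1)=\lambda\beta'(1)-i\mu-\alpha_i$ and $-f_i^{(k)}(1)=\lambda\beta^{(k)}(1)+k\lambda\beta^{(k-1)}(1)$ reproduces~(\ref{first:moment}) for $n=1$ and~(\ref{nth_moments:eq}) for $n\ge2$. It remains only to note that each recursion is well founded: running $n=1,2,\dots$ as the outer loop and $i=1,\dots,c-1$ as the inner loop, every term on the right is already available, the boundary probabilities $\pi_{i,i},\pi_{i+1,i+1}$ and the values $\Pi_i(1)$ having been expressed earlier in terms of $\pi_{0,0}$.

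The computations above are routine once~$(\star)$ and the values $f_i^{(k)}(1)$ are in place; I expect the only point really needing care to be the analytic justification invoked in the first paragraph --- that the partial generating functions are $n$ times differentiable up to and including $z=1$, equivalently that the queue length has finite moments of the required order under $\rho<1$ together with the batch-size assumption. A minor bookkeeping step is checking that the $i=1$ case of~$(\star)$ coincides term for term with~(\ref{functional_eq_Pi1z:eq}) under the identification $\pi_{i-1,i-1}=\pi_{0,0}$, which is a direct comparison of coefficients.
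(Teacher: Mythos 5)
Your proposal follows essentially the same route as the paper's own proof: the published argument consists precisely of differentiating the functional equation (\ref{Pi_i (z):eq}) once (respectively $n$ times), setting $z=1$, and rearranging, which is exactly your Leibniz computation with $f_i(1)=\alpha_i$, $f_i^\prime(1)=i\mu+\alpha_i-\lambda\beta^\prime(1)$ and $f_i^{(k)}(1)=-\lambda\bigl(\beta^{(k)}(1)+k\beta^{(k-1)}(1)\bigr)$ for $k\ge 2$. Your additional remarks on the legitimacy of differentiating at $z=1$ and on the order in which the double recursion is run are correct and more careful than the paper, which states none of this.

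One caveat concerns your justification of the base case. The exact consequence of the Leibniz rule applied to $\Pi_0(z)\bigl(\lambda+\alpha_0-\lambda\beta(z)\bigr)=(\lambda+\alpha_0)\pi_{0,0}$ is $\Pi_0^{(n)}(1)=\frac{1}{\alpha_0}\sum_{k=1}^{n}\binom{n}{k}\lambda\beta^{(k)}(1)\Pi_0^{(n-k)}(1)$, and this collapses to $\Pi_0^{(n)}(1)=\bigl(n\lambda\beta^\prime(1)/\alpha_0\bigr)\Pi_0^{(n-1)}(1)$ only when $\beta^{(k)}(1)=0$ for all $k\ge 2$, i.e.\ for single arrivals $\beta(z)=z$; it does not collapse for a fixed batch size $b\ge 2$, since then $\beta^{\prime\prime}(1)=b(b-1)\neq 0$. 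Hence your parenthetical appeal to a ``fixed-batch normalisation'' is not right as stated, and the closed form $\Pi_0^{(n)}(1)=n!\,\pi_{0,0}(\lambda\beta^\prime(1))^n(\lambda+\alpha_0)/\alpha_0^{n+1}$ is exact only in the single-arrival case, the general base case being the full recursion above. This limitation is inherited from the theorem's own statement (the paper gives no derivation of the base case at all), so it does not put you behind the published proof, but it should be flagged explicitly rather than folded into that parenthesis; everything else, in particular the recursions (\ref{first:moment}) and (\ref{nth_moments:eq}) for $i=1,\dots,c-1$, is correctly derived and coincides with the paper's argument.
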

\begin{proof}
%
Differentiating (\ref{Pi_i (z):eq}), we obtain 
\begin{eqnarray*}
	f_i(z) \Pi_i^\prime (z) &  = &   \mbox{} - (i \mu + \alpha_i - \lambda z \beta^\prime (z) ) \Pi_i (z) +   \alpha_{i-1} \Pi_{i-1}^\prime (z)  + \alpha_i \pi_{i,i}  + (i+1)\mu \pi_{i+1,i+1}.
\end{eqnarray*}
Substituting $z=1$ into the above equation and arranging the result yields (\ref{first:moment}).
%
Differentiating (\ref{Pi_i (z):eq}) for $n \geq 2$ times at $z = 1$ and arranging the result, we obtain (\ref{nth_moments:eq}).
\end{proof}

\begin{thm}\label{factorial:c:thm}
We have 
\begin{equation}\label{Pi_c^{(n)} (1):eq}
\Pi_c^{(n)} (1) = \frac{   A_n   }{ (n+1) (c\mu - \lambda \beta^\prime (1) ) }, \quad n \in \bbN,
\end{equation}
where 
\begin{eqnarray*}
A_n = \alpha_{c-1} \Pi_{c-1}^{(n+1)} (1) +  \sum_{k=2}^{n+1} {}_{n+1} C_k \left( \lambda k \beta^{(k-1)} (1) + \lambda \beta^{(k)} (1)   \right) \Pi^{(n+1-k)}_c (1).
\end{eqnarray*}
\end{thm}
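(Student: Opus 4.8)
The plan is to mirror the proof of Theorem~\ref{factorial:0c-1:thm}, but starting from the functional equation for $\Pi_c(z)$ rather than that for $\Pi_i(z)$ with $i\le c-1$. Recall that we derived
\[
	f_c(z)\,\Pi_c(z) = \alpha_{c-1}\bigl(\Pi_{c-1}(z) - \Pi_{c-1}(1)\bigr),
\]
where $f_c(z) = (\lambda + c\mu)z - \lambda z\beta(z) - c\mu$. First I would differentiate this identity $n+1$ times at $z=1$ using the Leibniz rule. The key observation is that $f_c(1) = 0$, so the term in which all $n+1$ derivatives fall on $\Pi_c$ disappears; this is precisely why the left-hand side, after differentiation, involves $\Pi_c^{(n)}(1)$ (and lower derivatives) rather than $\Pi_c^{(n+1)}(1)$. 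Thus the $(n+1)$-st derivative of the left side, evaluated at $1$, can be written as $(n+1)f_c'(1)\Pi_c^{(n)}(1)$ plus a sum over $k\ge 2$ of $\binom{n+1}{k} f_c^{(k)}(1)\,\Pi_c^{(n+1-k)}(1)$.

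The second step is to compute the derivatives of $f_c$. We have $f_c'(z) = (\lambda + c\mu) - \lambda\beta(z) - \lambda z\beta'(z)$, hence $f_c'(1) = c\mu - \lambda\beta'(1)$, which supplies the denominator $(n+1)(c\mu - \lambda\beta'(1))$ in~(\ref{Pi_c^{(n)} (1):eq}). For $k\ge 2$, differentiating $f_c$ a total of $k$ times and using the product rule on the $\lambda z\beta(z)$ term gives $f_c^{(k)}(z) = -\lambda k\beta^{(k-1)}(z) - \lambda z\beta^{(k)}(z)$, so $f_c^{(k)}(1) = -\bigl(\lambda k\beta^{(k-1)}(1) + \lambda\beta^{(k)}(1)\bigr)$. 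Moving these lower-order terms to the right-hand side flips the sign, producing exactly the $+\sum_{k=2}^{n+1}\binom{n+1}{k}\bigl(\lambda k\beta^{(k-1)}(1) + \lambda\beta^{(k)}(1)\bigr)\Pi_c^{(n+1-k)}(1)$ appearing in $A_n$. On the right-hand side of the functional equation, differentiating $\alpha_{c-1}(\Pi_{c-1}(z) - \Pi_{c-1}(1))$ a total of $n+1$ times simply yields $\alpha_{c-1}\Pi_{c-1}^{(n+1)}(1)$, which is the remaining term of $A_n$. Collecting everything and solving for $\Pi_c^{(n)}(1)$ gives~(\ref{Pi_c^{(n)} (1):eq}).

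There is no serious obstacle here: the argument is a routine Leibniz-rule computation once one notices the cancellation coming from $f_c(1)=0$. The only mild care needed is bookkeeping — correctly identifying which Leibniz terms survive, tracking the sign changes when the lower-order terms are moved across the equality, and checking that $\Pi_{c-1}^{(n+1)}(1)$ (needed as input) is already available from the recursion of Theorem~\ref{factorial:0c-1:thm}, so that $A_n$ is indeed computable. One should also note that $c\mu - \lambda\beta'(1) = c\mu(1-\rho) > 0$ by the stability assumption, so the division is legitimate. I would also remark, as a consistency check, that setting $n$ to its smallest value recovers the first-moment relation~(\ref{Pi_c(1):eq}) in the form $\Pi_c'(1) = \alpha_{c-1}\Pi_{c-1}'(1)/(c\mu - \lambda\beta'(1))$ (there the sum over $k$ is empty), which validates the indexing.
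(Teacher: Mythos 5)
Your argument is correct and essentially the same as the paper's: the paper differentiates $f_c(z)\Pi_c(z)=\alpha_{c-1}\bigl(\Pi_{c-1}(z)-\Pi_{c-1}(1)\bigr)$ $n$ times and then applies L'Hopital's rule at $z=1$, which is exactly your $(n+1)$-fold Leibniz differentiation exploiting $f_c(1)=0$, with the same values $f_c^{\prime}(1)=c\mu-\lambda\beta^{\prime}(1)$ and $f_c^{(k)}(1)=-\bigl(\lambda k\beta^{(k-1)}(1)+\lambda\beta^{(k)}(1)\bigr)$. One trivial slip in your closing consistency remark: evaluating the formula at $n=0$ recovers (\ref{Pi_c(1):eq}) with $\Pi_c(1)$, not $\Pi_c^{\prime}(1)$, on the left-hand side.
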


\begin{proof}
We have 
\begin{equation*}
 f_c(z)  \Pi_c(z) = \alpha_{c-1} ( \pi_{c-1} (z) - \pi_{c-1,c-1})    -c\mu \pi_{c,c}.
\end{equation*}
Differentiating this equation $n \geq 1$ times, we obtain 

\begin{eqnarray*}
    f_c(z)  \Pi_c^{(n)} (z)  +  \sum_{k=1}^n {}_n C_k f_c^{(k)} (z) \Pi_c^{(n-k)} (z)  =  \alpha_{c-1} \Pi_{c-1}^{(n)} (z),
\end{eqnarray*}
where $\Pi_c^{(-1)} (z) = 0, \forall \ |z| < 1$.  Arranging this equation leads to 
\begin{eqnarray}\label{nthdiff_pic(n)z:eq}
\Pi_c^{(n)} (z)  =  \frac{ \alpha_{c-1} \Pi_{c-1}^{(n)} (z) - \sum_{k=1}^n {}_n C_k f_c^{(k)} (z) \Pi_c^{(n-k)} (z)  }{ f_c(z) }. 
\end{eqnarray}

We observe inductively that both the denominator and numerator in the right hand side of (\ref{nthdiff_pic(n)z:eq}) vanish at $z=1$. 
Thus, applying L'Hopital's rule and arranging the result, we obtain  (\ref{Pi_c^{(n)} (1):eq}).
\end{proof}

\begin{remark}
It should be noted that in order to obtain the $n$-th factorial moment $\Pi_c^{(n)} (1)$, we need to have the $(n+1)$-th factorial moment $\Pi_{c-1}^{(n+1)} (1)$.
Fortunately, $\Pi_{c-1}^{(n+1)} (1)$ is expressed in terms of $\Pi_{0}^{(n+1)} (1)$ which is explicitly obtained for any $n$ according to Theorem~\ref{factorial:0c-1:thm}.
\end{remark}

\begin{remark}
It should be noted that when $\alpha_i = \alpha$ ($i = 0,1,\dots,c-1$), our results reduce to those presented in~\cite{phungduc14b}.
\end{remark}



\section{Waiting time distribution}\label{waiting_time_distribution:sec}
This section is devoted to the waiting time distribution of an arbitrary customer. To this end, we first find the steady state probability $p_{i,n-1}$ that an arriving customer finds $i$ servers in active model and $n-1$ ($n \geq 1$) customers standing before him. We then find the conditional waiting time $W_{i,n}$ of a tagged customer that finds $i$ active servers and $n-1$ customers stand before him. Let $\widetilde{W}_{i,n} (s)$ denote the LST of  $W_{i,n}$. Let $W$ denote the waiting time of an arbitrary customer and $\widetilde{W} (s)$ denote the LST of $W$. 
We then have 
\[
	\widetilde{W} (s) = \sum_{i=0}^c \sum_{n=i+1}^\infty p_{i,n-1} \widetilde{W}_{i,n} (s).
\]
In Artalejo et al.~\cite{Artalejo05}, explicit expression for $\widetilde{W}_{i,n} (s)$ is obtained. 
In fact, $\widetilde{W}_{i,n} (s)$ is the first passage time from state $(i,n)$ to the boundary $(i,i), i = 0,1,\dots,c-1,c$. 
Thus, we can obtained the waiting time distribution by inverting the LST.   
\subsection{Computation of $p_{i,n}$}
Because $p_{i,n}$ denotes the probability that an arriving customer finds that there are $i$ active servers and himself at the order of $n$ (to depart from the system). We have 
\[
	p_{i,n} = \sum_{j=1}^{n} \pi_{i,n-j} r_j, 
\]
where $r_j$ is the probability that an arriving customer finds himself at the $j$-th in the batch. According to Burke~\cite{Burke} and Cromie et al.~\cite{Cromie}, we have 
\[
	r_j = \frac{1}{{\rm E}[B]} \sum_{i=j}^\infty \beta_i, \qquad j = 1,2,\dots,
\]
where ${\rm E}[B] = \beta^\prime (1)$ is the mean batch size. 
\subsection{Algorithm for the stationary distribution}
In this section, we present an algorithm calculating all the joint steady state probabilities. Since $\pi_{i,i}$ ($i = 0,1,\dots,c$) are obtained. We can calculate all other steady state probability using a recursive algorithm. Indeed, $\pi_{0,n}$ is recursively obtained if $\pi_{0,0}$ is given. Given that $\pi_{0,n}$ is known for any $n$ and that $\pi_{1,1}$ is known, we can recursively obtain all the probabilities $\pi_{1,n}$ for $n \geq 1$. Similarly, we could obtain all the probabilities $\pi_{i,n}$ $(i,n) \in \mathcal{S}$.

\section{Conditional Decomposition}\label{decomposition}
We have derived the following result.
\begin{eqnarray*}
 \Pi_c(z) & = & \frac{ \alpha_{c-1} ( \pi_{c-1} (z) - \pi_{c-1,c-1})    -c\mu \pi_{c,c}     }{ f_c(z) }, \\ 
 \Pi_c(1) & = & \frac{ \alpha_{c-1} \Pi_{c-1}^\prime (1)   }{c\mu - \lambda}.
\end{eqnarray*}

Let $Q^{(c)}$ denote the conditional queue length given that all $c$ servers are busy, i.e.,
\[
	\mathbb{P} (Q^{(c)} = i) = \mathbb{P} (N = i + c \ | \ C = c),
\]
where $N$ and $C$ are the number of customers in the system and that of busy servers in the steady state, respectively. 
Let $P_c(z)$ denote the generating function of  $Q^{(c)}$. It is easy to see that 
\begin{align*}
	P_c(z) & = \frac{\Pi_c(z)}{\Pi_c(1)} \\
            & = \frac{  \alpha_{c-1} ( \Pi_{c-1} (z) - \pi_{c-1,c-1})    -c\mu \pi_{c,c}     }{\alpha_{c-1} \Pi_{c-1}^\prime(1) (z-1)} g(z) \\
	        & = \frac{\Pi_{c-1} (z) - \Pi_{c-1}(1)}{\Pi_{c-1}^\prime(1)(z-1)}   g(z) \\
	        & = \frac{\sum_{j=1}^\infty \pi_{c-1,c-1+j} (z^j - 1)  }{\Pi_{c-1}^\prime(1)(z-1)} g(z) \\
	        & = \frac{\sum_{j=1}^\infty \pi_{c-1,c-1+j} \sum_{i=0}^{j-1} z^i  }{\Pi_{c-1}^\prime(1)} g(z)\\
	        & = \frac{ \sum_{i=0}^\infty \left(\sum_{j = i+1}^\infty \pi_{c-1,c-1+j} \right)  z^i }{\Pi_{c-1}^\prime(1)} g(z),
\end{align*}
where we have used $c\mu \pi_{c,c} = \alpha_{c-1} (\Pi_{c-1} (1) - \pi_{c-1,c-1}) $ in the second equality and 
\[
g(z) = \frac{(c\mu - \lambda \beta^\prime(1))(z-1)}{ (c\mu + \lambda )z - \lambda z \beta(z) - c\mu}.
\]
It should be noted that $g(z)$ is the generating function of the number of waiting customers in the conventional M$^{X}$/M/$c$ system without setup time (denoted by $Q^{(c)}_{ON-IDLE}$) under the condition that $c$ servers are busy.

We give a clear interpretation for the generating function:
\[
	\frac{ \sum_{i=0}^\infty \left(\sum_{j = i+1}^\infty \pi_{c-1,c-1+j} \right)  z^i }{\Pi_{c-1}^\prime(1)}.
\]
For simplicity, we define  
\begin{eqnarray*}
	q_{c-1,i} =   \frac{\sum_{j = i+1}^\infty \pi_{c-1,c-1+j}}{\Pi_{c-1}^\prime(1)}, \qquad i \in \bbZ_+.
\end{eqnarray*}

We have
\[
	\sum_{j=i+1}^\infty \pi_{c-1,c-1 + j} = \mathbb{P} ( N - C > i \ | \ C= c-1) \mathbb{P} (C=c-1).
\]
Thus, we have 
\[
	q_{c-1,i} = \frac{\mathbb{P} ( N - C > i \ | \ C= c-1)}{ \mathbb{E} [N - C \ | \ C = c-1] }.
\]

It should be noted that $N-C$ is the number of waiting customers. Thus, the discrete random variable with the distribution $q_{c-1,i}$ ($i=0,1,2,\dots$) is the 
the probability that a waiting waiting customer find $i$ other customers waiting in front of him under the condition that there are $c-1$ active servers (See~Burke~\cite{Burke}). Let $Q_{Res}$ denote this random variable. 

Thus our decomposition result is summarized as follows. 
\[
	Q^{(c)} \,{\buildrel d \over =}\  Q^{(c)}_{ON-IDLE}  +    Q_{Res}. 
\]
\begin{remark}
Tian et al.~\cite{Tian99,Tian03b} obtain a similar result for a multiserver model with Poisson arrival and vacation, i.e., $\alpha_i = (c-i) \alpha$ and $\beta(z) = z$. 
However, the random variable with the distribution $p_{c-1,i}$ here is not given a clear physical meaning in~\cite{Tian99,Tian03b}.
\end{remark}


\section{Special Cases and Variant models} \label{variant models}

\subsection{Staggered setup model}
In staggered setup policy, only one server can be allowed to be in setup process at a time. Thus, this model is a special case of the model in this paper where $\alpha_i = \alpha$ ($i = 0,1,\dots,c-1$)~\cite{phungduc14c}.

%
Some simpler results could be obtained if we restrict ourself to the case of single arrival, i.e., $\beta(z) = z$. 
In this section is devoted to the decomposition property of the queue length where we show the single server system in Section~\ref{single_server:sec} and discuss the multiserver model in Section~\ref{multiserver:decompose:sec}. 
\subsubsection{Single server}\label{single_server:sec}
We consider the single server case. The partial generating functions are given as follows.
\[
	\Pi_0 (z) = \frac{(1- \rho) \alpha }{\lambda + \alpha - \lambda z}, \qquad \Pi_1 (z) = \frac{(1-\rho) \lambda \alpha}{ (\mu- \lambda z) (\lambda + \alpha - \lambda z)  },   
\]
where $\rho = \lambda/\mu$.
Let $\Pi(z)$ denote the generating function of the number of waiting customers. We have 
\[
	\Pi (z) = \Pi_0 (z) + \Pi_1(z) = (1-\rho) \left( 1 + \frac{\rho}{1-\rho z} \right) \frac{\alpha}{\lambda + \alpha - \lambda z}.
\]
It should be noted that 
\[
	(1-\rho) \left( 1 + \frac{\rho}{1-\rho z} \right)
\]
and 
\[
	\frac{\alpha}{\lambda + \alpha - \lambda z}
\]
represent the generating function of the number of waiting customers in the corresponding M/M/1 queue without setup time and that of customers arriving in the remaining setup time, respectively. 
Thus, we have 
\[
	L \,{\buildrel d \over =}\ L_1 + L_2,
\]
where the $L$ is the queue length of the current model while $L_1$ and $L_2$ represent the queue length of the conventional M/M/1 queue and the number of customers that arrive during the remaining setup time.
\subsubsection{Multiserver}\label{multiserver:decompose:sec}
In this section, we investigate the decomposability of the queue length. In particular we answer the question: does equation (\ref{multiple_serv:eq}) hold?
\begin{equation}\label{multiple_serv:eq}
	L \,{\buildrel d \over =}\ L_1 + L_2,
\end{equation}
where $L_1$ is the queue length of the M/M/$c$ without setup time and $L_2$ is the number of customers that arrive to the queue during the remaining setup time. 

The generating function for the number of waiting customers in the conventional M/M/$c$ queueing system is given by
$1-C(c \rho,c) + C(c \rho,c)(1-\rho)/(1-\rho z)$ where $\rho = \lambda/(c\mu)$ and $C(c \rho,c)$ is the Erlang C formula for the waiting probability in the conventional M/M/$c$ system without setup time. Therefore, if the decomposition result is established the generating function of the number of waiting customers in the system with setup time $\Pi (z)$ must be given by the following formula.
\begin{equation}
\label{mmc_decompose:eq}
	\Pi (z) = \frac{\alpha}{\alpha + \lambda - \lambda z} \left( 1- C(c \rho,c) + C(c \rho,c) \frac{1- \rho}{1- \rho z}  \right). 
\end{equation}

In~\cite{Gandhi10} the authors state that the decomposition property is held for the model meaning that (\ref{mmc_decompose:eq}) is true.

We prove this property. Indeed for the case where $\beta(z) = z$ after some tedious calculations, we find that 
\[
	\Pi_i (z) = \pi_{i,i} \frac{\lambda + \alpha}{\lambda + \alpha - \lambda z}, \qquad \pi_{i,i}=  \pi_{0,0} \left( \frac{\lambda}{\mu} \right)^i \frac{1}{i!}, 
\]
for $i = 0,1,\dots,c-1$ and
\[
	\pi_{c,c}=  \pi_{0,0} \left( \frac{\lambda}{\mu} \right)^c \frac{1}{c!}, \quad \Pi_c (z) = \pi_{c,c} \frac{\lambda + \alpha}{(1-\rho z)(\lambda + \alpha - \lambda z)}.      
\]
It follows from $\Pi (z) = \sum_{i=0}^c \Pi_i(z) $ and $\Pi (1) = 1$ that (\ref{mmc_decompose:eq}) is true.

From the decomposition result for the queue length, we obtain the decomposition result for the waiting time via distributional Little's law. 
In particular, we have 
\begin{equation}\label{multiple_serv_waiting:eq}
	W \,{\buildrel d \over =}\ W_1 + W_2,
\end{equation}
where $W$ denotes the waiting time in the current system while $W_1$ and $W_2$ are the waiting time in the corresponding M/M/$c$ system without setup time and the setup time, respectively.

\begin{remark}
From the generating function, we obtain explicit expressions for the joint stationary distribution as follows.
\[
	\pi_{i,j} = \pi_{i,i} \left( \frac{\lambda}{\lambda+\alpha} \right)^{j-i}, \qquad j = i,i+1,\dots, i = 0,1,\dots,c-1.
\]
Furthermore, if $\rho \neq  \varphi_0 = \lambda/(\lambda + \alpha)$, we have 
\[
	\pi_{c,c+k} =  \pi_{c,c} \left( \frac{\varphi_0^{k+1} - \rho^{k+1} }{\varphi_0 - \rho}   \right), \qquad k \geq 0.
\]
If $\rho = \lambda/(\lambda + \alpha)$, we have 
\[
	\pi_{c,c+k} =  \pi_{c,c} (k+1) \rho^k, \qquad k \geq 0.
\]

\end{remark}

\subsection{Vacation model}
A special case is the one with vacation. The model Poisson arrival is presented in~\cite{Tian99}. In vacation model, a server goes to vacation upon completion of a service and there is not a waiting customer. 
Assuming that the vacation period is exponentially distributed with mean $1/\alpha$. Thus, when there are $i$ active servers and some waiting customers, $c-i$ servers come back to service with rate $(c-i) \alpha$. 
Thus this vacation model is equivalent to our setup model where the setup time is exponentially distributed with mean $1/\alpha_i$ where $\alpha_i = (c-i) \alpha$ provided that there are $i$ active servers.

See Figure~\ref{state_dependent_setup:fig} for the transition among states for the model with state-dependent setup and individual arrivals. 
\begin{figure}
\begin{center}
%
\includegraphics[scale=0.25]{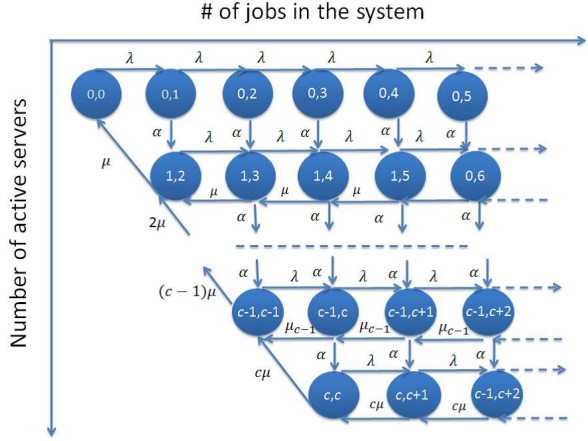}
\end{center}
\caption{State transition diagram.}
\label{state_dependent_setup:fig}
%
\end{figure}

\subsection{Model with queue-length-dependent setup}
Another variant is the model where the number of setting up servers depends on the number of waiting customers \cite{Gandhi10b}. In particular, the setup rate will be $\min(j-i,c-i) \alpha$ provided that there are $i$ active servers and $j$ customers in the system~\cite{phungduc14c}. See Figure~\ref{demand_dependent_setup:fig} for the transition among states for the case of individual arrivals, i.e., $\beta(z) = z$. This model is more complex due to the inhomogeneity of boundary states where the number of customers in the system $j \leq c$. However, we can treat this model by a similar approach with a minor modification. In particular, we may define 
the generating functions for homogeneous part, i.e., $j \geq c$.
\[
	\Pi_i (z) = \sum_{j = c}^\infty \pi_{i,j} z^{j-i}, \qquad i = 0,1,\dots,c-1,c.
\]
This results in a set of equations of generating functions for the states $\{(i,j); j \geq c \}$. In addition, we have balance equations for the states $\{(i,j); i \leq j \leq c \}$

\begin{figure}
\begin{center}
%
\includegraphics[scale=0.5]{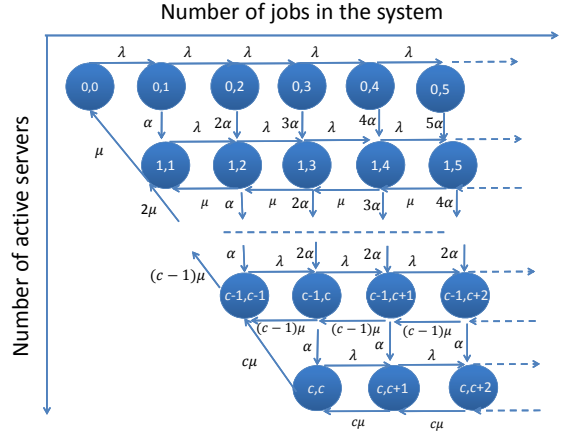}
\end{center}
\caption{State transition diagram.}
\label{demand_dependent_setup:fig}
%
\end{figure}

\section{Performance Measures and Numerical Results}\label{numerical:sec}

\subsection{Power Consumption}
The cost per unit time for each state: SETUP, ON and IDLE of a server is set as follows: $C_{setup} = 1, C_{run} = 1$ and $C_{idle} = 0.6$. The power consumption of our system with staggered setup is given by 
\[
		P_{ON-off} = C_{setup} (1 - \sum_{i=0}^{c-1} \pi_{i,i} - \Pi_c(1) ) + C_{run} c \rho,
\]
where $c \rho = \lambda/\mu$ is the mean number of running servers. We plot four curves corresponding to the cases $\alpha = 0.1,1,10$ and 100. For comparison, we also plot the curves for the conventional M/M/$c$ queue under the same setting. It should be noted that in the conventional M/M/$c$ system, an idle server is not turned off. As a result, the cost for power consumption is given by 
\[
		P_{ON-idle} = C_{run} c \rho + C_{idle} (c- c \rho).
\]

\subsection{Total Cost}
The mean number of waiting customers is given by 
\[
	{\rm E} [Q] = \sum_{i=0}^c \Pi^\prime(1).
\]

We consider a cost function taking into account both power consumption and performance (mean number of waiting customers). Our aim is to to investigate the the characteristics of the cost function. Cost function for the ON-OFF model is given by 
\[
	C_{ON-off} = P_{ON-off} + \frac{1}{\delta} {\rm E}[Q].
\]
On the other hand, the cost function for ON-Idle model is given by 
\[
	C_{ON-idle} = P_{ON-idle} + \frac{1}{\delta} {\rm E}[Q_i],
\]
where ${\rm E}[Q_i]$ is the mean queue length $M^X$/M/c queue without setup time which could be obtained from the analysis in~\cite{Cromie}.

\begin{figure}[htbp]
\begin{tabular}{cc}
\begin{minipage}{0.5\hsize}
\begin{center}
\includegraphics[scale=0.55]{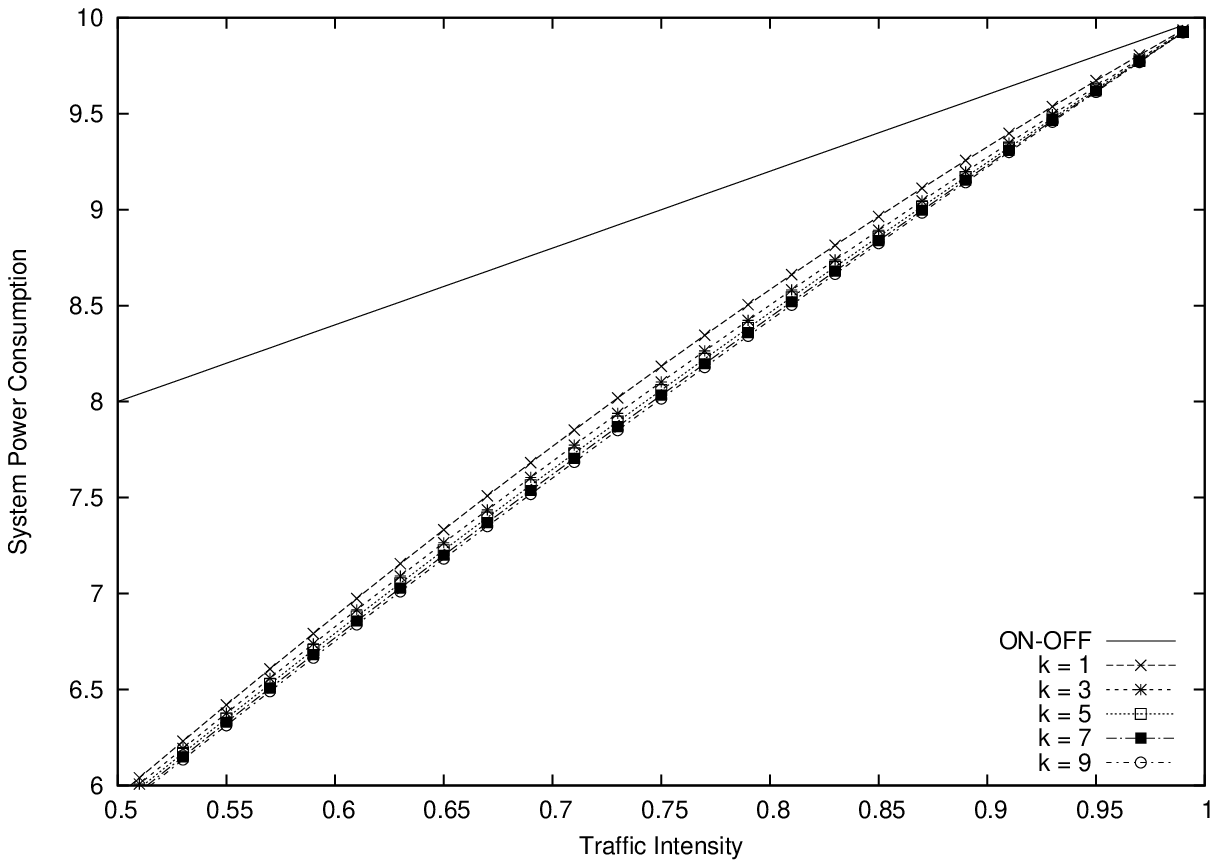}
\caption{Power Consumption $\alpha = 0.1$.}
\label{Power_fixed_batch_rho_change_alpha01_costsetup_equal_costrun:fig}
\end{center}
\end{minipage}
\begin{minipage}{0.5\hsize}
\begin{center}
\includegraphics[scale=0.55]{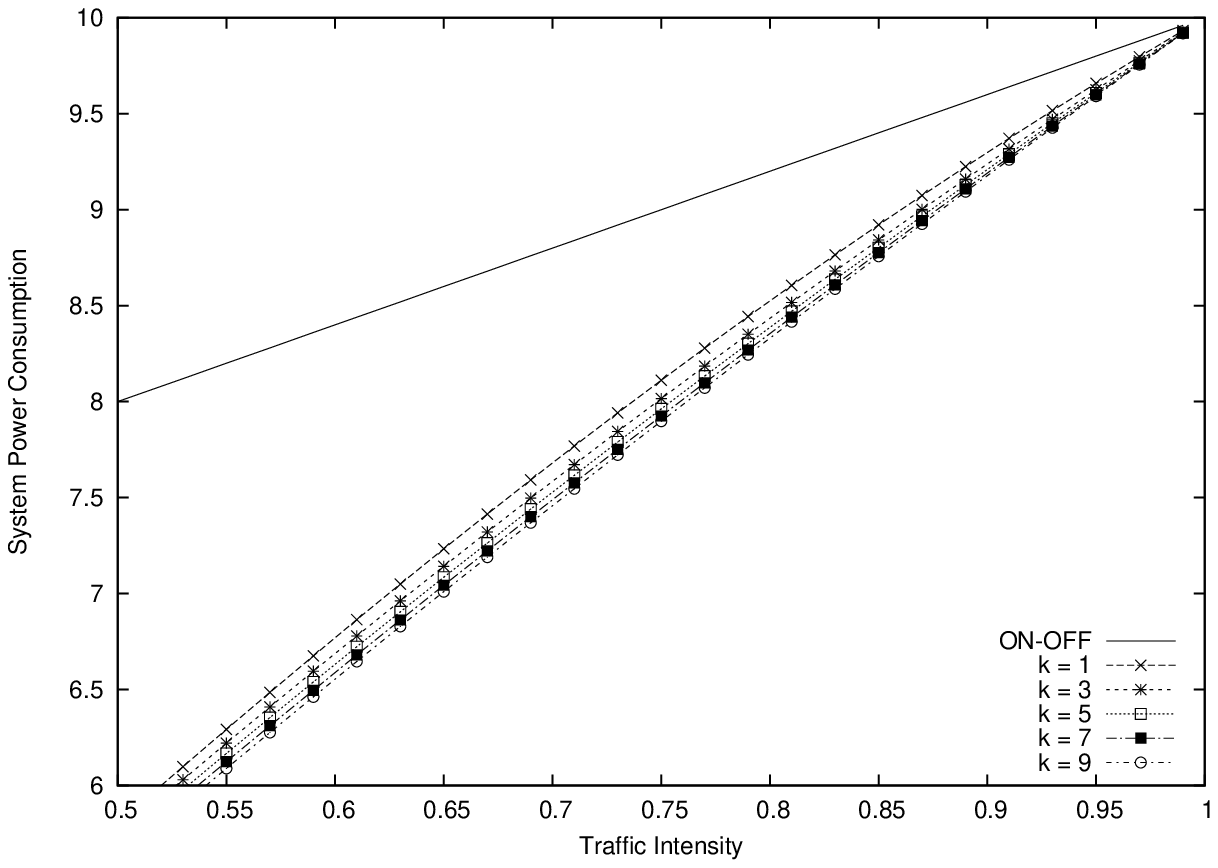}
\caption{Power Consumption $\alpha = 1$.}
\label{Power_fixed_batch_rho_change_alpha1_costsetup_equal_costrun:fig}
\end{center}
\end{minipage}
\end{tabular}
\end{figure}

In this section, we consider the case where $\alpha_i = \alpha$, i.e., staggered setup policy. In all the figures, the curves for On-Idle policy is indicated by "On-Idle" and other curves are of the On-Off model.  
\subsection{Power consumption}
In this section we investigate the power consumption against the traffic intensity. Figures~\ref{Power_fixed_batch_rho_change_alpha01_costsetup_equal_costrun:fig}, \ref{Power_fixed_batch_rho_change_alpha1_costsetup_equal_costrun:fig} and 
\ref{Power_fixed_batch_rho_change_alpha10_costsetup_equal_costrun:fig} against the traffic intensity for $\alpha = 1$. We observe from the three figures that the ON-Off policy always outperform the On-Idle policy. However, from the performance point of view, the waiting time in the former is longer than the latter. We will investigate the impact of setup time on the total cost of the system next section. An important observation is that keeping the traffic intensity the same, power consumption decreases with the batch size. This implies that it is more efficient to design the system so that customers arrive in group with large batch size.

\begin{figure}[htbp]
\begin{tabular}{cc}
\begin{minipage}{0.5\hsize}
\begin{center}
\includegraphics[scale=0.55]{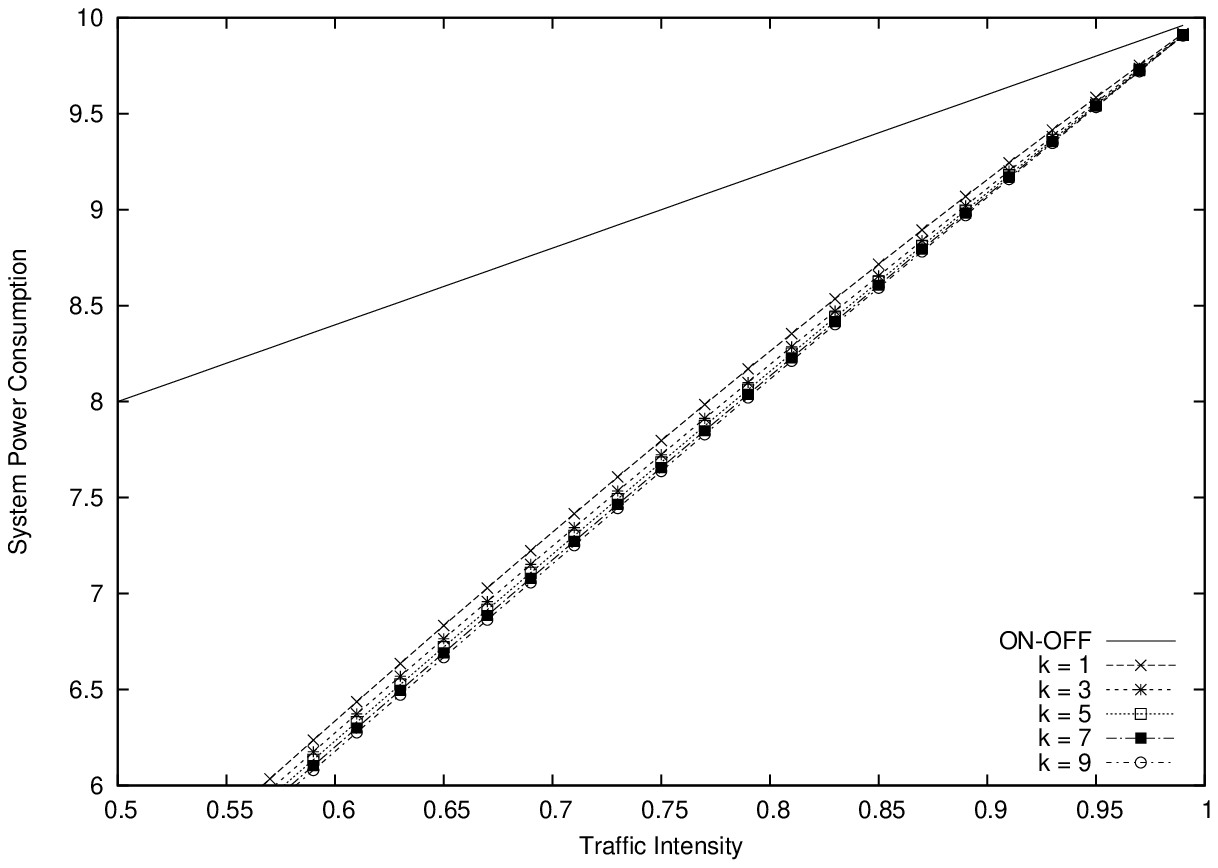}
\caption{Power Consumption $\alpha = 10$.}
\label{Power_fixed_batch_rho_change_alpha10_costsetup_equal_costrun:fig}
\end{center}
\end{minipage}
\begin{minipage}{0.5\hsize}
\begin{center}
\includegraphics[scale=0.55]{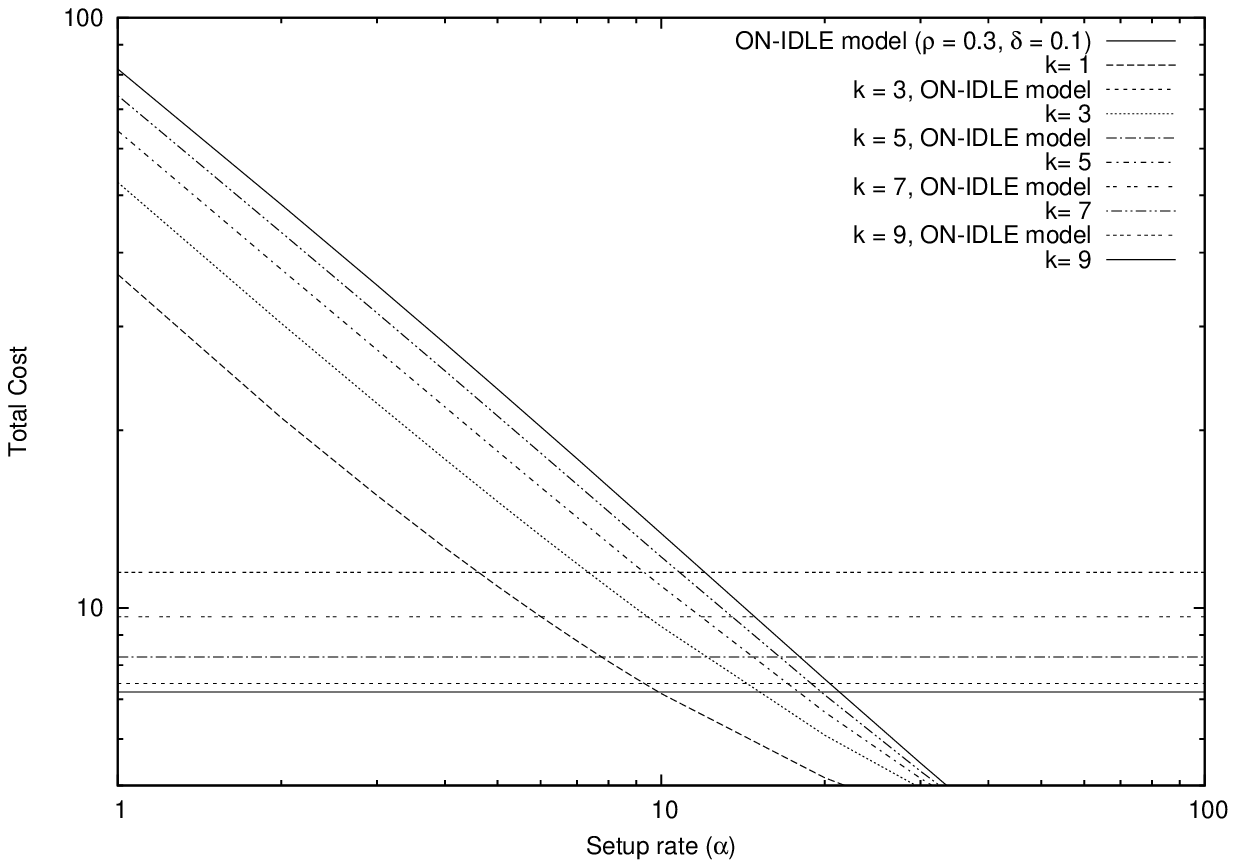}
\caption{Total Cost $\rho = 0.3, \delta = 0.1$.}
\label{power_qos_tradeoff_vs_alpha_rho03_beta01:fig}
\end{center}
\end{minipage}
\end{tabular}
\end{figure}

\subsection{Cost function}

In this section, we investigate the cost function against various parameters. Figures~\ref{power_qos_tradeoff_vs_alpha_rho03_beta01:fig}, \ref{power_qos_tradeoff_vs_alpha_rho03_beta1:fig} and \ref{power_qos_tradeoff_vs_alpha_rho03_beta10:fig} represent the cost function against the setup rate $\alpha$ for $\delta = 0.1$, 1, and 10, respectively, provided that $\rho = 0.3$. In data center, a server typically operates under the load of 40\%~\cite{Schwartz12}. Thus, it may be interesting to investigate the cost function around this value. It should be noted that $\delta = 0.1$ corresponds to the case where the importance of performance, i.e. mean queue length is 10 times bigger than that of the power consumption while $\delta = 10$ represents opposite case where power consumption is given priority. For comparison we also plot the cost function for the ON-Idle model. We observe from the three graphs that there exists some $\alpha_{0.3}$ for which the ON-OFF model is more efficient than the ON-IDLE one when $\alpha > \alpha_{0.3}$. 

 Figures~\ref{power_qos_tradeoff_vs_rho_ap1delta01:fig}, \ref{power_qos_tradeoff_vs_rho_ap1delta1:fig} and \ref{power_qos_tradeoff_vs_rho_ap1delta10:fig} 
show the cost function against the traffic intensity $\rho$ for $\delta = 0.1$, 1 and $10$, respectively. We observe from Figure~\ref{power_qos_tradeoff_vs_rho_ap1delta01:fig} that the ON-Idle model outperforms that ON-Off model. This implies that when the importance is placed on the performance ($\delta = 0.1$), it is better to keep the servers ON all the time. On the other hand, we observe from Figure~\ref{power_qos_tradeoff_vs_rho_ap1delta01:fig}  that the ON-Off model is always better than the ON-Idle one for $\delta = 10$. This implies that when the importance is placed on the power consumption, it is better to adopt the ON-Off model.

\begin{figure}[htbp]
\begin{tabular}{cc}
\begin{minipage}{0.5\hsize}
\begin{center}
\includegraphics[scale=0.55]{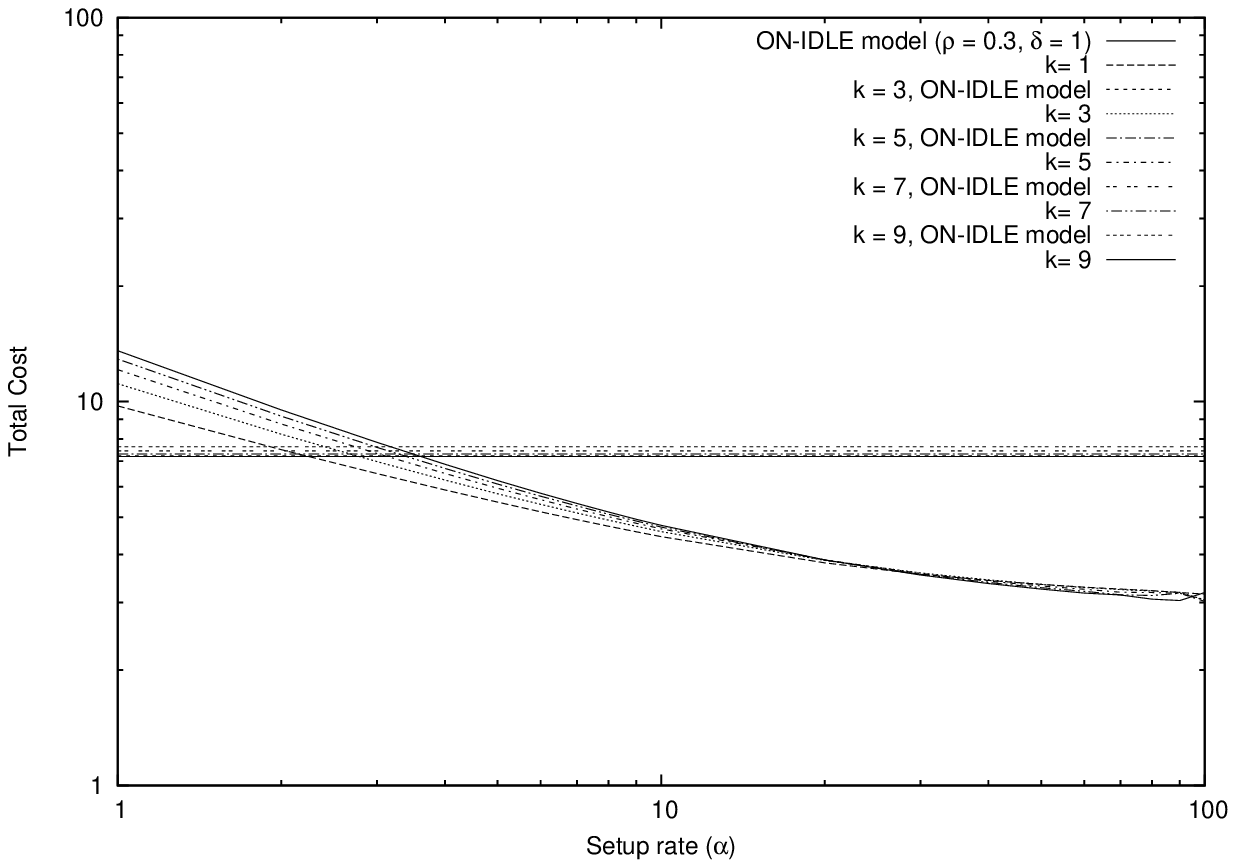}
\caption{Total Cost $\rho = 0.3, \delta = 1$.}
\label{power_qos_tradeoff_vs_alpha_rho03_beta1:fig}
\end{center}
\end{minipage}
\begin{minipage}{0.5\hsize}
\begin{center}
\includegraphics[scale=0.55]{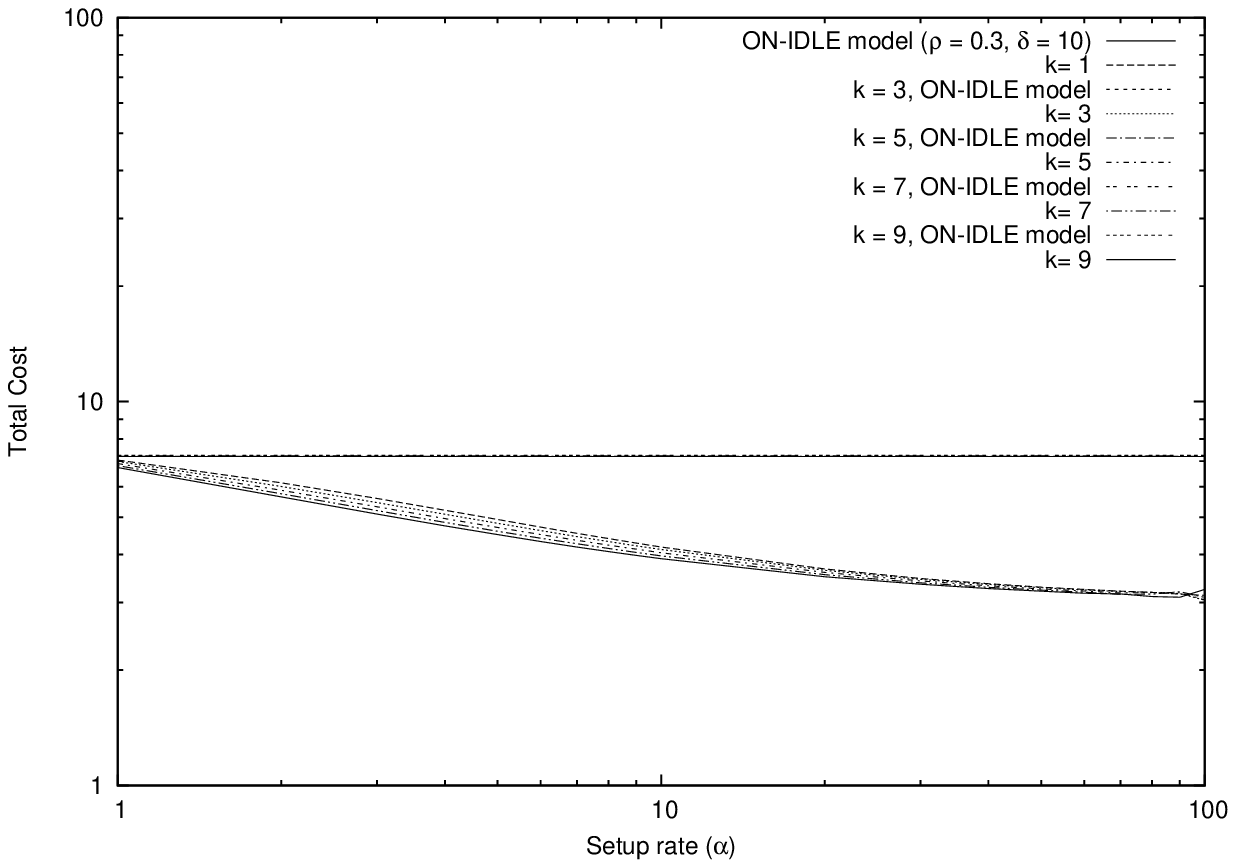}
\caption{Total Cost $\rho = 0.3, \delta = 10$.}
\label{power_qos_tradeoff_vs_alpha_rho03_beta10:fig}
\end{center}
\end{minipage}
\end{tabular}
\end{figure}

\begin{figure}[htbp]
\begin{tabular}{cc}
\begin{minipage}{0.5\hsize}
\begin{center}
\includegraphics[scale=0.55]{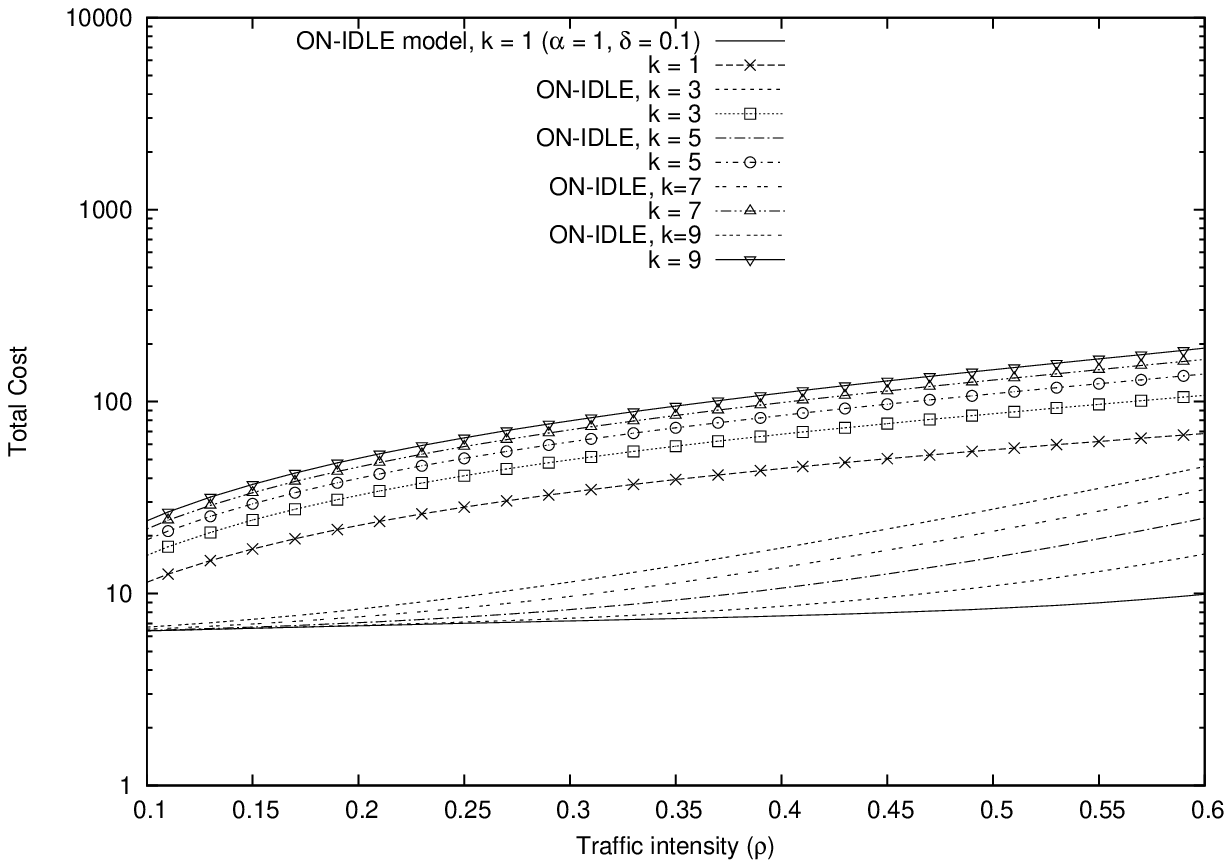}
\caption{Total Cost $\alpha = 1, \delta = 0.1$.}
\label{power_qos_tradeoff_vs_rho_ap1delta01:fig}
\end{center}
\end{minipage}
\begin{minipage}{0.5\hsize}
\begin{center}
\includegraphics[scale=0.55]{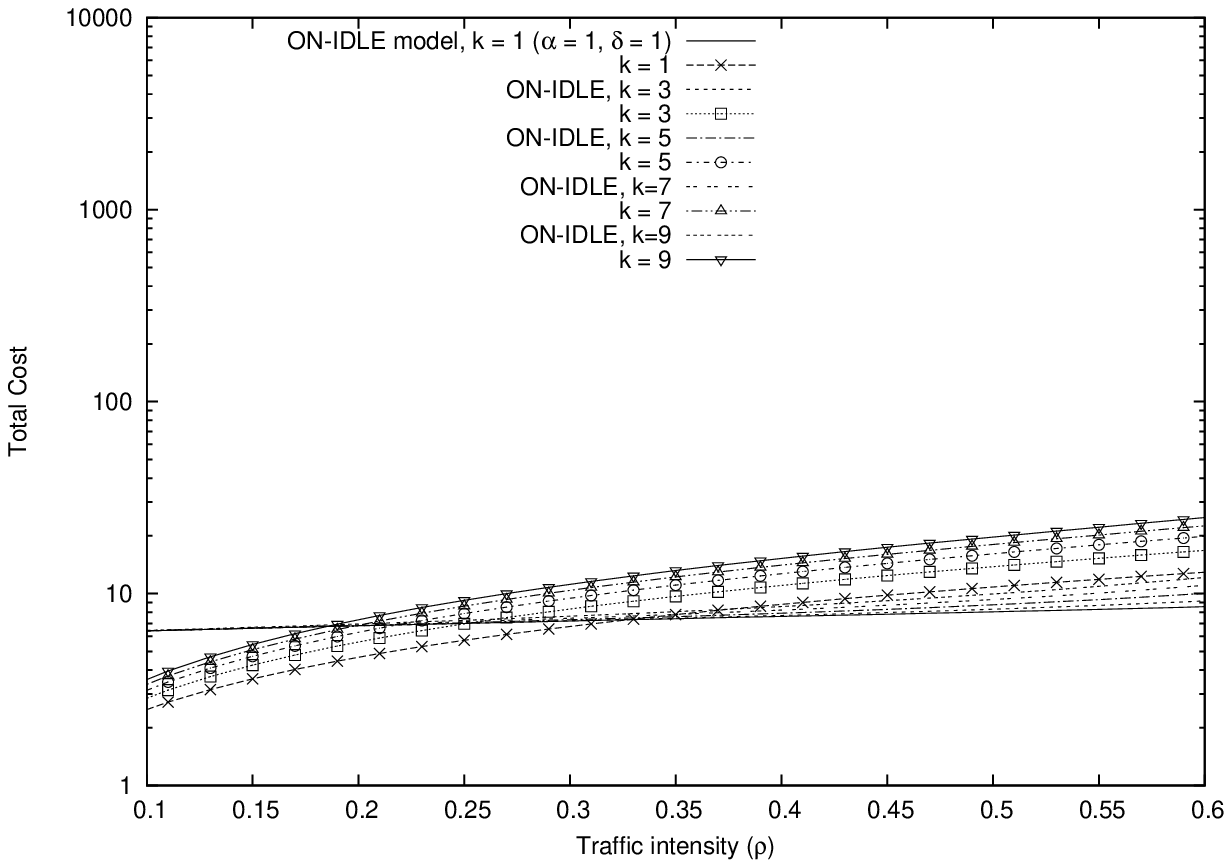}
\caption{Total Cost $\alpha = 1, \delta = 1$.}
\label{power_qos_tradeoff_vs_rho_ap1delta1:fig}
\end{center}
\end{minipage}
\end{tabular}
\end{figure}

\begin{figure}
\begin{center}
\includegraphics[scale=0.65]{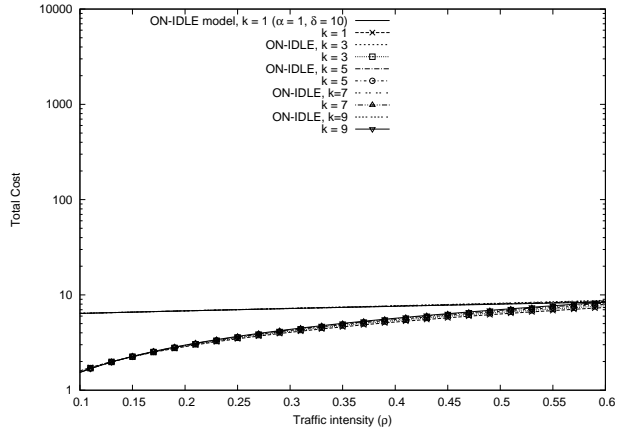}
\end{center}
\caption{Total Cost $\alpha = 1, \delta = 10$.}
\label{power_qos_tradeoff_vs_rho_ap1delta10:fig}
\end{figure}

\section{Concluding remarks}\label{concluding_remark:sec}
In this paper, we have considered the M${}^{\rm X}$/M/$c$ queueing system with staggered setup where only one server can be in setup mode at a time. A server is turned off immediately after serving a job and there is no waiting customer. If there are some waiting customers, OFF servers are turned on one by one. Using a generating function approach, we have obtained the partial generating functions of the queue length. We also have obtained recursive formulae for computing the factorial moments of the number of waiting jobs. Numerical experiments have shown some insights into the performance of the system. Furthermore, it is also important to consider the case where a fixed number of servers are always kept ON in order to reduce the delay of customers. It is also interesting to find the relation between the decomposition formula in this paper with that of Fuhrmann and Cooper~\cite{Fuhrmann85}. We have obtained partial generating functions for the joint queue lengths. A possible future work may be to obtain the tail asymptotic for the joint queue lengths.

\section*{Acknowledgements}
The author would like to thank Professor Herwig Bruneel of Ghent University and Professor Onno Boxma of Eindhoven University of Technology for useful remarks on the conditional decomposition. 
This research was supported in part by Japan Society for the Promotion of Science, JSPS Grant-in-Aid for Young Scientists (B), Grant Number 2673001. 


\end{document}